\documentclass{article}
\usepackage{geometry}
\geometry{a4paper,left=3cm,right=3cm,top=3cm,bottom=3cm}
\usepackage[utf8]{inputenc}
\usepackage{amssymb}
\usepackage{amsmath}
\usepackage{amsthm}
\usepackage{amsfonts}
\usepackage{color}
\usepackage{tikz}
\usepackage{pgf}
\usetikzlibrary{shapes,arrows,positioning,calc} 
\usepackage{tabu}
\usepackage{hyperref}
\usepackage{longtable}
\usepackage{array}
\usepackage{siunitx,multirow,tabularx,booktabs}
\usepackage{amsmath,amssymb,bm,mathtools}

\usepackage{algorithm}
\usepackage[noend]{algpseudocode}

\makeatletter
\def\BState{\State\hskip-\ALG@thistlm}
\makeatother

\usepackage{thmtools}
\usepackage{thm-restate}
\newtheorem{theorem}{Theorem}

\newtheorem{proposition}[theorem]{Proposition}

\newtheorem{example}[theorem]{Example}

\begin{document}

\title{Transcendence of Sturmian Numbers over an Algebraic Base}

\author{
{\sc Florian~Luca}\\
{School of Mathematics, University of the Witwatersrand}\\
{Private Bag 3, Wits 2050, South Africa}\\
{Research Group in Algebraic Structures and Applications}\\
{King Abdulaziz University, Jeddah, Saudi Arabia}\\
%{Max-Planck Institute for Software Systems, Saarbr\"ucken, Germany}\\
{florian.luca@wits.ac.za}
\and
{\sc Jo\"el Ouaknine}\\
{Max Planck Institute for Software Systems}\\
{Saarland Informatics Campus, Saarbr\"ucken, Germany  }\\
{joel@mpi-sws.org}
\and
{\sc James Worrell}\\
{Department of Computer Science}\\
{University of Oxford, Oxford OX1 3QD, UK}\\
{jbw@cs.ox.ac.uk}}

\date{}
\maketitle

\begin{abstract}
  We consider numbers of the form
  $S_\beta(\boldsymbol{u}):=\sum_{n=0}^\infty \frac{u_n}{\beta^n}$ for
  $\boldsymbol{u}=\langle u_n \rangle_{n=0}^\infty$ a Sturmian
  sequence over a binary alphabet and $\beta$ an algebraic number with
  $|\beta|>1$.  We show that every such number is transcendental.
  More generally, for a given base~$\beta$ and given irrational
  number~$\theta$ we characterise the
  $\overline{\mathbb{Q}}$-linear independence of sets of the form
  $\left\{ 1,
  S_\beta(\boldsymbol{u}^{(1)}),\ldots,S_\beta(\boldsymbol{u}^{(k)})
  \right\}$, where $\boldsymbol{u}^{(1)},\ldots,\boldsymbol{u}^{(k)}$ are
  Sturmian sequences having slope $\theta$.
 
  We give an application of our main result to the theory of dynamical
  systems, showing  that for a contracted rotation on the unit circle
  with algebraic slope, its limit set is either finite or consists
  exclusively of transcendental elements other than its endpoints $0$
  and $1$.  This confirms a conjecture of Bugeaud, Kim, Laurent, and
  Nogueira~\cite{BKLN21}.
  \end{abstract}

  \section{Introduction}
  \label{sec:intro}
  A famous conjecture of Hartmanis and Stearns asserts that a real
  number $\alpha$ whose sequence of digits can be produced by a
  linear-time Turing machine (in the sense that for all $n$, given
  input $n$ in unary the machine outputs the first $n$ digits of
  $\alpha$ in time $O(n)$) is either rational or transcendental.  This
  conjecture remains open and is considered to be very difficult.  A
  weaker version---proposed by Cobham and eventually
  proved by Adamczewski, Bugeaud, and
  Luca~\cite{ABL}---asserts the transcendence of an irrational
  automatic real number.  The underlying intuition is that the
  sequence of digits of an irrational algebraic number cannot be too
  simple.  Indeed, the main technical result of~\cite{ABL} is that
  over an integer base every number whose sequence of digits has
  linear subword complexity is either rational or transcendental.
  Cobham's conjecture is an immediate corollary, given that automatic
  sequences have linear subword complexity.

  In this paper we prove a transcendence result for numbers whose
  digit sequences are Sturmian words (sometimes called mechanical
  words).  Such words have minimal subword complexity among
  non-ultimately periodic words and have a natural characterisation in
  terms of dynamical systems as codings of rotations on the unit
  circle.  The novelty of this work is that we handle expansions over
  an arbitrary algebraic base rather than just an integer base.  Here
  we are motivated by applications to control theory and dynamical
  systems.

  An infinite sequence $\boldsymbol{u}=u_0u_1u_2\ldots$ over a binary
  alphabet is said to be \emph{Sturmian} if the number $p(n)$ of
  different length-$n$ factors in $\boldsymbol u$ satisfies $p(n)=n+1$
  for all $n \in \mathbb{N}$, see~\cite{MH2}.  Coven and
  Hedlund~\cite{CH73} show that an infinite word such that
  $p(n)\leq n$ for some $n$ is necessarily ultimately periodic. Thus
  Sturmian words have minimal subword complexity among non-ultimately
  periodic words over a binary alphabet $\{0,1\}$.  The letters in a
  Sturmian word have a limiting frequency---the limit frequency of the
  letter $1$ is called the \emph{slope} of the word.  Related to this,
  Sturmian words have a natural characterisation in terms of dynamical
  systems, namely as codings of the orbits of irrational rotations on
  $\mathbb{R} / \mathbb{Z}$.  Perhaps the best known example of a
  Sturmian word is the \emph{Fibonacci word}.  This is defined as the
  limit $\boldsymbol{f}_\infty$ of the sequence
  $(\boldsymbol f_n)_{n=0}^\infty$ of finite strings over the binary
  alphabet $\{0,1\}$, defined by the recurrence $\boldsymbol f_0:=0$,
  $\boldsymbol f_1:=01$, and
  $\boldsymbol f_n=\boldsymbol f_{n-1}\boldsymbol f_{n-2}$ for all
  $n \geq 2$.  The limit is well defined since $\boldsymbol f_n$ is a
  prefix of $\boldsymbol f_{n+1}$ for all $n \in \mathbb{N}$.  The
  Fibonacci word has slope $1/\phi$, where $\phi=\frac{1+\sqrt{5}}{2}$ is the golden ratio.
  It so happens that the Fibonacci word is morphic, although it is not
  automatic.

  Let $\boldsymbol{u}$ be a Sturmian word over a finite alphabet
  $\Sigma \subseteq \overline{\mathbb{Q}}$ and let
  $\beta \in \overline{\mathbb{Q}}$ be such that $|\beta|>1$.  Then we
  call
  $S_\beta(\boldsymbol{u}):=\sum_{n=0}^\infty \frac{u_n}{\beta^n}$ a
  \emph{Sturmian number} with \emph{sequence of digits}
  $\boldsymbol{u}$ and \emph{base} $\beta$.\footnote{Our notion of
    Sturmian number is more permissive than that of Morse and
    Hedland~\cite{MH} who restricted to the case of an integer base
    $b>1$ and digit sequence $\boldsymbol{u}$ over alphabet
    $\{0,\ldots,b-1\}$.}  Ferenczi and Mauduit~\cite{FM} proved the
  transcendence of every number $S_\beta(\boldsymbol{u})$ over an
  integer base $\beta > 1$.  Their proof combined combinatorial
  properties of Sturmian sequences with a $p$-adic version of the
  Thue-Siegel-Roth Theorem, due to Ridout.  This result was
  strengthened by Bugeaud \emph{et al.}~\cite{BKLN21} to show
  $\overline{\mathbb{Q}}$-linear independence of sets of the form
  $\left\{1,S_\beta(\boldsymbol{u}^{(1)}),S_\beta(\boldsymbol{u}^{(2)})\right\}$ where
  $\boldsymbol{u}^{(1)},\boldsymbol{u}^{(2)}$ are Sturmian words having the same
  slope and $\beta>1$ is an integer.  In the case of an algebraic
  base $\beta$, Laurent and Nogueria~\cite{LN} observe that if
  $\boldsymbol u$ is a characteristic Sturmian word
  (cf. Section~\ref{sec:stuttering}), then the transcendence of
  $S_\beta(\boldsymbol{u})$ follows from a result of Loxton and Van
  der Poorten~\cite[Theorem 7]{LP} concerning transcendence of
  Hecke-Mahler series.

  In this paper we give a common generalisation of the above three
  results.  For every algebraic base $\beta$ and irrational slope
  $\theta$ we give sufficient and necessary conditions for
  $\overline{\mathbb{Q}}$-linear independence of a set of Sturmian numbers
  $\left\{ 1,
  S_\beta(\boldsymbol{u}^{(1)}),\ldots,S_\beta(\boldsymbol{u}^{(k)})
  \right\}$, where $\boldsymbol{u}^{(1)},\ldots,\boldsymbol{u}^{(k)}$, where
  are Sturmian sequences of
  slope $\theta$.  Our characterisation relies on a new combinatorial
  criterion on a sequence $\boldsymbol u$ that ensures transcendence
  of $S_\beta(\boldsymbol{u})$ for $\beta$ an algebraic base.  Similar
  to~\cite{BKLN21}, the Subspace Theorem plays a major role in our
  argument.  In~\cite{LOW} we give a more elaborate and powerful
  transcendence criterion that allows proving
  $\overline{\mathbb Q}$-linear independence results about Sturmian
  numbers (again with a common slope) over different algebraic bases.

  For a sequence $\boldsymbol u$ with linear subword complexity (i.e.,
  such that $\lim\inf_n \frac{p(n)}{n} < \infty$), it is shown
  in~\cite{AB1} that $S_\beta(\boldsymbol{u})$ is transcendental under
  the condition that $\beta$ is a Pisot number (i.e., a real algebraic
  integer greater than one all of whose Galois conjugates have
  absolute value less than one).  Compared to the main result of this
  paper, the class of sequences considered by~\cite{AB1} is more
  general (requiring merely linear subword complexity rather than the
  stronger condition of being Sturmian), but the condition on the base
  is more restrictive (being a Pisot number rather than merely an
  algebraic number of absolute value strictly greater than one).

  In Section~\ref{sec:contracted} we give an application of our main
  result to the theory of dynamical systems.  We consider the set $C$
  of limit points of a contracted rotation $f$ on the unit interval,
  where $f$ is assumed to have an algebraic contraction factor.  The
  set $C$ is finite if $f$ has a periodic orbit and is otherwise a
  Cantor set, that is, it is homeomorphic to the Cantor ternary set
  (equivalently, it is compact, nowhere dense, and has no isolated
  points).  In the latter case we show that all elements of $C$ except
  its endpoints $0$ and $1$ are transcendental.  Our result confirms a
  conjecture of Bugeaud, Kim, Laurent, and Nogueira, who proved a
  special case of this result in~\cite{BKLN21}.  We remark that it is
  a longstanding open question whether the actual Cantor ternary set
  contains any algebraic elements other than $0$ or $1$.

\section{Preliminaries}
\label{sec:prelim}
Let $K$ be a number field of degree $d$ and let $M(K)$ be the set of
\emph{places} of $K$.  We divide $M(K)$ into the collection of
\emph{infinite places}, which are determined either by an embedding of
$K$ in $\mathbb{R}$ or a complex-conjugate pair of embeddings of $K$
in $\mathbb{C}$, and the set of \emph{finite places}, which are
determined by prime ideals in the ring $\mathcal{O}_K$ of integers of $K$.

For $x \in K$ and $v \in M(K)$, define the
absolute value $|x|_v$ as follows: $|x|_v := |\sigma(x)|^{1/d}$ in
case $v$ corresponds to a real embedding
$\sigma:K\rightarrow \mathbb{R}$; $|x|_v := |\sigma(x)|^{2/d}$ in case
$v$ corresponds to a complex-conjugate pair of embeddings
$\sigma,\overline{\sigma}:K \rightarrow \mathbb{C}$; finally,
$|x|_v := N(\mathfrak{p})^{-\mathrm{ord}_{\mathfrak{p}}(x)/d}$ if $v$
corresponds to a prime ideal $\mathfrak{p}$ in $\mathcal{O}$
and $\mathrm{ord}_{\mathfrak{p}}(x)$ is the order
of $\mathfrak{p}$ as a divisor of the ideal $x\mathcal{O}$.
With the
above definitions we have the \emph{product formula}:
$\prod_{v \in M(K)} |x|_v = 1$ for all $x \in K^\ast$.  Given a set of
places $S\subseteq M(K)$, the ring $\mathcal{O}_S$ of
\emph{$S$-integers} is the subring comprising all $x \in K$ such
$|x|_v \leq 1$ for all finite places $v\in S$.

For $m\geq 2$ the \emph{absolute Weil height} of
$\boldsymbol{x}=(x_1,\ldots,x_m) \in K^m$ is defined to be
\[  H(\boldsymbol{x}):=\prod_{v \in M(K)}\max(|x_1|_v,\ldots,|x_m|_v)
  \, .\]
This definition is independent of the choice of field $K$ containing
$x_1,\ldots,x_m$.  Note the restriction $m\geq 2$ in the above
definition.  For $x \in K$ we define its height $H(x)$ to be $H(1,x)$.
For a non-zero polynomial $f = \sum_{i=0}^s a_i X^i \in K[X]$, where
$s\geq 1$, we define its height $H(f)$ to be the height of its
coefficient vector $(a_0,\ldots,a_s)$.

The following classical result of  Schlickewei will be instrumental in our approach.
\begin{theorem}[Subspace Theorem]
  Let $S \subseteq M(K)$ be a finite set of places, containing all
  infinite places and let $m\geq 2$.  For every $v\in S$ let
  $L_{1,v},\ldots,L_{m,v}$ be linearly independent linear forms in $m$
  variables with algebraic coefficients.  Then for any $\varepsilon>0$
  the solutions $\boldsymbol{x} \in \mathcal{O}_S^m$ of the inequality
\[ \prod_{v\in S} \prod_{i=1}^m |L_{i,v}(\boldsymbol{x}) |_v \leq
  H(\boldsymbol{x})^{-\varepsilon} \] are contained in finitely many
proper subspaces of $K^m$.
\label{thm:SUBSPACE}
\end{theorem}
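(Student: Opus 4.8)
The plan is to argue by contradiction, following the architecture introduced by Schmidt for the case where $S$ consists of the single archimedean place and extended by Schlickewei to the general $S$-arithmetic setting. Suppose the solutions $\boldsymbol{x} \in \mathcal{O}_S^m$ of the displayed inequality are not contained in finitely many proper subspaces. After replacing each solution by a scalar multiple we may assume the $\boldsymbol{x}$ are primitive $S$-integer vectors, and after passing to an infinite subsequence we arrange that the local quantities $|L_{i,v}(\boldsymbol{x})|_v$ decay in a uniform geometric pattern as $H(\boldsymbol{x}) \to \infty$. The goal is to manufacture a contradiction from a single well-chosen finite tuple of such solutions.

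First I would carry out the geometry-of-numbers reduction. For each place $v \in S$ the forms $L_{1,v},\ldots,L_{m,v}$ define a norm on the completion $K_v^m$, and the inequality constrains $\boldsymbol{x}$ to lie in a family of shrinking parallelepipeds. Applying Minkowski's theorem on successive minima simultaneously across all $v \in S$ (the adelic Minkowski--Mahler machinery) shows that, along the subsequence, the normalized vectors accumulate along a bounded number of directions and that the \emph{shape} of the approximation is essentially fixed. Concretely this lets me select solutions $\boldsymbol{x}^{(1)},\ldots,\boldsymbol{x}^{(r)}$ whose heights grow rapidly, $H(\boldsymbol{x}^{(j+1)}) \gg H(\boldsymbol{x}^{(j)})^{C}$, and whose associated local data satisfy the quantitative compatibility conditions needed below.

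Next comes the auxiliary construction. Using Siegel's lemma in its arithmetic form (Bombieri--Vaaler), I would build a nonzero polynomial $P$ in $r$ blocks of $m$ variables, multihomogeneous of bounded degree $(d_1,\ldots,d_r)$ in the blocks, with $S$-integer coefficients of controlled height, such that $P$ vanishes to high order --- measured by the Schmidt \emph{index} relative to the flags determined by the forms $L_{i,v}$ --- at the point $(\boldsymbol{x}^{(1)},\ldots,\boldsymbol{x}^{(r)})$. The counting argument that produces $P$ exploits precisely the smallness of $\prod_{v,i} |L_{i,v}(\boldsymbol{x})|_v$ to force a large lower bound on the index.

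The crux --- and the step I expect to be the main obstacle --- is the matching upper bound on the index, namely showing that a polynomial of the chosen shape and height cannot vanish to such high order at a point with rapidly increasing, sufficiently generic coordinates. This is the role of \emph{Roth's lemma} (in Schmidt's treatment) or, more conceptually, the Faltings--W\"ustholz product theorem and Dyson's lemma. Reconciling the lower bound from the construction with this upper bound yields the contradiction. The genuine difficulty is twofold: controlling the index simultaneously at every place of $S$, archimedean and non-archimedean alike, so that the local approximation gains aggregate correctly through the product formula; and engineering the degrees $d_j$ and the height of $P$ so that the two index estimates are incompatible by a margin governed by $\varepsilon$. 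Since a complete self-contained argument is book-length, in the paper we invoke the theorem directly in the form due to Schlickewei.
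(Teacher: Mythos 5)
The paper gives no proof of this statement at all: Theorem~\ref{thm:SUBSPACE} is quoted as a classical result, attributed to Schlickewei (the $S$-adic generalisation of Schmidt's Subspace Theorem), and is used as a black box in the proof of Theorem~\ref{thm:main}. So there is no ``paper proof'' to compare against; the only legitimate treatment at the level of this paper is citation, which is in fact what your final sentence falls back on.

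As a proof attempt, your text has a genuine gap --- indeed the gap is the entire argument. What you have written is an accurate roadmap of the Schmidt--Schlickewei proof architecture (reduction to fixed approximation classes, adelic geometry of numbers and successive minima, construction of an auxiliary multihomogeneous polynomial of large index via Siegel's lemma, and the opposing index upper bound via Roth's lemma or the product theorem), but every one of these steps is named rather than carried out. None of the quantitative content is present: no choice of the parameters $r$, $C$, $(d_1,\ldots,d_r)$; no statement, let alone verification, of the ``compatibility conditions'' your selected solutions must satisfy; no computation of the index lower bound from the hypothesis $\prod_{v,i}|L_{i,v}(\boldsymbol{x})|_v \leq H(\boldsymbol{x})^{-\varepsilon}$; no formulation of Roth's lemma precise enough to exhibit the claimed incompatibility. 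Since the contradiction is supposed to emerge from a numerical clash between two index estimates, a sketch that states neither estimate cannot be checked or repaired; it is a table of contents for a proof, not a proof. The correct course here --- both for you and for the paper --- is simply to cite Schlickewei's theorem, exactly as the paper does.
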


We will also need the following more elementary proposition.  
\begin{proposition}{\cite[Proposition 2.3]{LEN97}}
  Let $f \in K[X]$ be a polynomial with at most 
  $k+1$ terms.  Assume that $f$ can be written as the sum of two 
  polynomials $g$ and $h$, where every monomial of $g$ has degree at 
  most $d_0$ and every monomial of $h$ has degree at least $d_1$. 
  Let $\beta$ be a root of $f$ that is not a root of unity.  If 
  $d_1-d_0> \frac{\log (k \, H(f))}{\log H(\beta) }$ then $\beta$ is a 
  common root of $g$ and $h$. 
\label{prop:gap}
  \end{proposition}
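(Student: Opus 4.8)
The plan is to prove Proposition~\ref{prop:gap} by a gap argument comparing the size of $g(\beta)$ and $h(\beta)$ at the root $\beta$. Since $\beta$ is a root of $f = g + h$, we have $g(\beta) = -h(\beta)$, so if we can show that $g(\beta)$ and $h(\beta)$ must both vanish under the stated hypothesis, we are done. The key idea is that the ``gap'' condition $d_1 - d_0 > \frac{\log(k\,H(f))}{\log H(\beta)}$ forces the high-degree part $h(\beta)$ to be either zero or else too large (in an appropriate absolute value) to be cancelled by the low-degree part $g(\beta)$, unless both vanish.

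\medskip\noindent\textbf{Approach.} First I would dispose of the trivial cases: if $g = 0$ or $h = 0$ then $f$ itself equals $h$ or $g$, and since $\beta$ is a root of $f$ the conclusion is immediate. So assume both $g$ and $h$ are nonzero and, for contradiction, that $h(\beta) \neq 0$ (equivalently $g(\beta) \neq 0$). The strategy is to choose a place $v \in M(K)$ at which $|\beta|_v$ is large, namely $|\beta|_v > 1$, which exists because $\beta$ is not a root of unity: by the product formula and the fact that $\beta$ is not a root of unity, the height $H(\beta) > 1$, so $\prod_v \max(1, |\beta|_v) = H(\beta) > 1$ and some place contributes a factor exceeding $1$. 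At this place I would estimate $|h(\beta)|_v$ from below and $|g(\beta)|_v$ from above using the number of terms and the heights of the coefficients.

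\medskip\noindent\textbf{Key steps, in order.} Let $h = \sum a_j X^j$ with all exponents $j \ge d_1$, and write its lowest-degree term as $a_{\ell} X^{\ell}$ with $\ell \ge d_1$. Factoring out $\beta^{\ell}$, I would bound $|h(\beta)|_v \ge |a_\ell|_v\,|\beta|_v^{d_1} \cdot \bigl|1 + \sum_{j > \ell} (a_j/a_\ell)\,\beta^{j-\ell}\bigr|_v$ from below, and similarly bound $|g(\beta)|_v \le (\text{number of terms of }g)\cdot \max_i |b_i|_v \cdot |\beta|_v^{d_0}$ from above, where the $b_i$ are the coefficients of $g$. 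The total number of nonzero terms of $g$ and $h$ together is at most $k+1$, which is where the parameter $k$ enters. Combining these estimates with $g(\beta) = -h(\beta)$ yields an inequality of the shape $|\beta|_v^{d_1 - d_0} \le k\,H(f)$ (up to careful bookkeeping of the local heights at $v$ versus the global Weil height and accounting for the normalisation of $|\cdot|_v$). Taking logarithms and using $|\beta|_v > 1$ gives $d_1 - d_0 \le \frac{\log(k\,H(f))}{\log|\beta|_v} \le \frac{\log(k\,H(f))}{\log H(\beta)}$, contradicting the hypothesis.

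\medskip\noindent\textbf{Main obstacle.} The delicate point is the bookkeeping relating the single local absolute value $|\cdot|_v$ to the global height $H(f)$ and to $H(\beta)$, since these are defined via products over all places. In particular, bounding $\max_i |b_i|_v$ and $|a_\ell|_v$ at a single place in terms of the global height $H(f)$, and relating $\log|\beta|_v$ to $\log H(\beta)$, requires choosing $v$ judiciously and tracking the $1/d$ normalisation factors in the definitions of the absolute values. The cleanest route is likely to work with the specific place $v$ maximising $|\beta|_v$ and to use that $\log H(\beta) = \sum_{v}\log\max(1,|\beta|_v)$, so that $\log|\beta|_v \le \log H(\beta)$ holds at the chosen place while the reverse-direction bounds on coefficient sizes follow from $\max(1,|b|_v) \le H(f)$ term-by-term; making these inequalities line up in the right direction with the factor $k$ absorbing the number of terms is the part that demands care.
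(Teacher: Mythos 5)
Your proposal has the right flavour (a Liouville/gap argument driven by the number of terms and the heights), but as outlined it contains two genuine gaps, each of which is fatal. (The paper itself gives no proof of this proposition --- it is quoted from Lenstra's paper --- so I compare your attempt with Lenstra's argument.)

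The first gap is the lower bound on $|h(\beta)|_v$ at your single chosen place $v$. After factoring out the lowest term $a_\ell\beta^\ell$ you need the factor $\bigl|1+\sum_{j>\ell}(a_j/a_\ell)\beta^{j-\ell}\bigr|_v$ to be bounded below (roughly by $1/(k\,H(f))$ after bookkeeping), and no such local bound exists: the terms of $h(\beta)$ can nearly cancel at the place $v$. Already for $h(X)=bX^{d_1}+X^{d_1+1}$ one has $h(\beta)=b\beta^{d_1}(1+\beta/b)$, and choosing $b\in K$ $v$-adically close to $-\beta$ makes this factor as small as one likes; the smallness at $v$ is compensated at the \emph{other} places (that compensation is exactly what the product formula expresses, and it is what forces $H(f)$ to be large in such examples), but an estimate carried out at the single place $v$ cannot see it. A Liouville-type lower bound is no substitute, because it would degrade with $\deg f$, which is not controlled by the hypotheses. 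The second gap is at the very end: your chain $d_1-d_0\le\frac{\log(k\,H(f))}{\log|\beta|_v}\le\frac{\log(k\,H(f))}{\log H(\beta)}$ requires $\log|\beta|_v\ge\log H(\beta)$, whereas the true inequality is the reverse one, $\max(1,|\beta|_v)\le H(\beta)$ for every place $v$ --- indeed this is the direction you yourself state in your final paragraph. The two are equal only when $\beta$ is large at a single place; in general no choice of $v$ works. For instance, for $\beta=1+\sqrt5$ one has $H(\beta)=2$, while $|\beta|_v=(1+\sqrt5)^{1/2}<2$ and $(\sqrt5-1)^{1/2}<2$ at the two real places of $\mathbb{Q}(\sqrt5)$, and $|\beta|_v\le 1$ at all finite places. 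The height aggregates contributions from all places where $\beta$ is large, so a one-place inequality $|\beta|_v^{d_1-d_0}\le k\,H(f)$ (which is true, but weaker) does not yield the proposition.

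Both gaps are cured by the same missing idea: run the argument globally through the product formula, arranged so that only \emph{upper} bounds are ever needed. Assume $g,h\ne 0$ and $g(\beta)=-h(\beta)\ne 0$; also $\beta\ne 0$, since otherwise $h(0)=0$ and $\beta$ is trivially a common root. Since every monomial of $h$ has degree at least $d_1$, $\tilde h(X):=h(X)/X^{d_1}$ is a polynomial, and $\gamma:=g(\beta)\beta^{-d_1}=-\tilde h(\beta)$ is a nonzero element of $K$, so $\prod_{v\in M(K)}|\gamma|_v=1$. At each place with $|\beta|_v>1$ use the first expression: $|\gamma|_v\le \epsilon_v\,\max_i|b_i|_v\,|\beta|_v^{-(d_1-d_0)}$, where the $b_i$ are the coefficients of $g$; at each place with $|\beta|_v\le 1$ use the second: $|\gamma|_v\le \epsilon_v\,\max_j|a_j|_v$, since all exponents in $\tilde h$ are nonnegative. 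Here $\epsilon_v$ is the triangle-inequality factor: $1$ at finite places, and $m^{d_v/d}$ at an infinite place of local degree $d_v$ for a sum of $m$ terms. Because $g$ and $h$ are nonzero with disjoint supports and together have at most $k+1$ terms, each has at most $k$ terms, so $\prod_v\epsilon_v\le k$ (this is precisely where $k$, rather than $k+1$, enters). Moreover every coefficient of $g$ or of $h$ is a coefficient of $f$, so the product over all $v$ of the coefficient maxima used is at most $H(f)$, and $\prod_{v\,:\,|\beta|_v>1}|\beta|_v=H(\beta)$. Multiplying all the local bounds gives $1\le k\,H(f)\,H(\beta)^{-(d_1-d_0)}$, i.e.\ $(d_1-d_0)\log H(\beta)\le \log(k\,H(f))$. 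Since $\beta$ is neither zero nor a root of unity, Kronecker's theorem gives $H(\beta)>1$, so this contradicts the hypothesis. Note that nowhere is a lower bound on $|h(\beta)|_v$ required: the product formula supplies it implicitly, which is exactly the ingredient your local approach cannot replace.
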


\section{Stuttering Sequences}
\label{sec:stuttering}
Let $A\subseteq\overline{\mathbb{Q}}$ be a finite alphabet.  An infinite
sequence $\boldsymbol{u}=u_0u_1u_2\ldots \in A^\omega$ is said to be
\emph{stuttering} if for all $w>0$ there exist sequences
$\langle r_n\rangle_{n=0}^\infty$ and
$\langle s_n\rangle_{n=0}^\infty$ of positive integers and $d\geq 2$
such that:
\begin{enumerate}
\item[S1] $\langle r_n\rangle_{n=0}^\infty$ is unbounded and
  $s_n \geq wr_n$ for all $n\in\mathbb N$;
\item[S2] for all $n\in \mathbb{N}$ there exist integers
  $0 \leq i_1(n)<\ldots < i_{d}(n) \leq s_n$ such that the strings
  $u_0\ldots u_{s_n}$ and $u_{r_n} \ldots u_{r_n+s_n}$ differ at the
  set of indices $\bigcup_{j=1}^d \{i_j(n),i_{j}(n)+1\}$;
\item[S3] we have %the \emph{expanding gaps properties:}
  $i_d(n)-i_1(n) = \omega(\log r_n)$ and, writing $i_0(n):=0$ and
  $i_{d+1}(n):=s_n$ for all $n$, we have $i_{j+1}(n)-i_j(n)=\omega(1)$
  for all $j\in\{0,1,\ldots,d\}$;
\item[S4] for all $n\in\mathbb{N}$ and $j\in\{1,2\ldots,d\}$ we have
  $u_{i_j(n)}  + u_{i_j(n)+1}=
  u_{i_j(n)+r_n}+u_{i_j(n)+r_n+1}$.
\end{enumerate}

The notion of a stuttering sequence is reminiscent of the
transcendence conditions of~\cite{AB1,BKLN21,FM} in that it concerns
periodicity in an infinite word.  Roughly speaking, a sequence
$\boldsymbol u$ is stuttering if for all $w>0$ there are arbitrarily
long prefixes of $\boldsymbol u$ that, modulo a fixed number of
mismatches, comprise $w$ repetitions of some finite word.  The fact
that the number $w$ of repetitions is arbitrary is key to our being
able to prove transcendence results over an arbitrary algebraic base
$\beta$.  In compensation, our condition allows repetitions with a
certain number of discrepancies.  This should be contrasted with the
notion of stammering sequence in~\cite[Section 4]{AB1}, where there is
no allowance for such discrepancies and in which the quantity
corresponding to $w$ is fixed.

\begin{example}
  To illustrate the notion of stuttering sequence, we recall the
  example of the Fibonacci word.  That this sequence is stuttering is
  a consequence of Theorem~\ref{thm:main1}.  Here in fact the sequence
  of shifts $\langle r_n \rangle_{n=0}^\infty$ witnessing that the
  Fibonacci word is stuttering is the Fibonacci sequence
  $\langle 1,1,2,3,5,\ldots\rangle$.  Below we align the Fibonacci
  word $\boldsymbol{f}_\infty$ with its shift
  $\boldsymbol{f}_\infty^{(5)}$ by $r_5=5$, underlining the mismatches
  which arise in consecutive pairs that satisfy Condition S4.
  \begin{align*}
 \boldsymbol{f}_\infty := \,  & 010010\underline{10}01001010010\underline{10}010010\underline{10}0100101001
      \ldots \\
 \boldsymbol{f}_\infty^{(5)} := \,   & 010010\underline{01}01001010010\underline{01}010010\underline{01}0100101001 \ldots
\end{align*}
      \end{example}

In what follows, we use the following representation of Sturmian
words.  Write $I:=[0,1)$ for the unit interval and given
$x \in \mathbb{R}$ denote the integer part of $x$ by $\lfloor x \rfloor$ and
the fractional part of $x$ by
$\{x\}:=x-\lfloor x \rfloor \in I$.  Let $0<\theta<1$ be an irrational number
and define the \emph{rotation map} $T = T_\theta :I \rightarrow I$ by $T(y)= \{ y+\theta \}$.  Given
$x \in I$, the \emph{$\theta$-coding of $x$} is the infinite sequence
$\boldsymbol{u}=u_1u_2u_3 \ldots$ defined by $u_n := 1$ if
$T^n(x) \in [0,\theta)$ and $u_n:=0$ otherwise.  As shown by Morse and
Hedlund, $\boldsymbol u$ is a Sturmian word and, up to changing at
most two letters, all Sturmian words over a binary alphabet arise as
codings of the above type for some choice of $\theta$ and $x$.  In
particular, for the purposes of establishing our transcendence results
we may work exclusively with codings as defined above.  The number
$\theta$ is equal to the {slope} of the Sturmian word, as defined in
Section~\ref{sec:intro}.
The $\theta$-coding of $0$ is in particular
called the \emph{characteristic Sturmian word of slope $\theta$}.

The main result of this section is as follows:
\begin{theorem}
  Let $\theta \in (0,1)$ be irrational.  Given a positive integer $k$,
  let $c_0,\ldots,c_{k} \in \mathbb{C}$ and $x_1,\ldots,x_k \in I$.
  Suppose that $x_i-x_j \not\in\mathbb{Z}\theta+\mathbb{Z}$ for all
  $i\neq j$.  Writing $\langle u_n^{(i)} \rangle_{n=0}^\infty$ for the
  $\theta$-coding of $x_i$, for $i=1,\ldots,k$, define
  $u_n:=c_0+\sum_{i=1}^kc_i u_n^{(i)}$ for all $n\in \mathbb{N}$.
  Then $\boldsymbol{u}=\langle u_n\rangle_{n=0}^\infty$ is stuttering.
        \label{thm:main1}
  \end{theorem}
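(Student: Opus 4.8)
The plan is to exploit the representation of each coding $\boldsymbol u^{(i)}$ as a coding of the rotation $T=T_\theta$ and to understand precisely how $\boldsymbol u^{(i)}$ compares with its shift by a well-chosen amount $r$. First I would record the elementary computation that, writing $\delta:=\lVert r\theta\rVert$ for the distance from $r\theta$ to the nearest integer, the letters $u_n^{(i)}$ and $u_{n+r}^{(i)}$ differ exactly when the orbit point $\{x_i+n\theta\}$ lies in one of two arcs of length $\delta$ abutting the endpoints $0$ and $\theta$ of the coding interval $[0,\theta)$. The crucial observation is that these two arcs are exchanged by the rotation: if $\{x_i+n\theta\}$ lies in the arc at $0$ then $\{x_i+(n+1)\theta\}$ lies in the arc at $\theta$, and conversely. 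Consequently the positions at which $\boldsymbol u^{(i)}$ and its shift disagree occur in \emph{disjoint consecutive pairs} $(n,n+1)$, and at each such pair the two symbols are merely transposed, so that $(u_n^{(i)},u_{n+1}^{(i)})=(u_{n+1+r}^{(i)},u_{n+r}^{(i)})$. In particular $u_n^{(i)}+u_{n+1}^{(i)}=u_{n+r}^{(i)}+u_{n+1+r}^{(i)}$, which is the seed of conditions S2 and S4.

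Next I would pass to the linear combination. Since a transposition preserves the sum of the two symbols, and since $u_n=c_0+\sum_i c_i u_n^{(i)}$ depends linearly on the codings, a disagreement of $\boldsymbol u$ with its shift can only occur at a position where some contributing coding (one with $c_i\neq0$) disagrees, and the pair-sum $u_n+u_{n+1}$ is preserved there; this yields S4 for $\boldsymbol u$ directly, provided the disagreeing pairs coming from distinct codings remain disjoint. Here the hypothesis $x_i-x_j\notin\mathbb Z\theta+\mathbb Z$ enters: two codings $i\neq j$ can disagree at positions within a bounded distance $B$ of one another only if $\lVert(x_i-x_j)+t\theta\rVert<\delta$ for some integer $t$ with $|t|\le B$, and since $x_i-x_j$ lies at a positive distance from the finite set $\{-t\theta \bmod 1 : |t|\le B\}$, this is impossible once $\delta$ is small enough. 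Thus for any fixed $B$, the disagreement pairs of different codings are more than $B$ apart as soon as $\delta$ is sufficiently small.

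I would then fix the parameters. Take $r_n:=q_{m}$ to run through the denominators of the convergents of $\theta$, so that $r_n\to\infty$ and $\delta_n=\lVert q_m\theta\rVert\to0$; this makes the unboundedness requirement in S1 automatic. The best-approximation property of convergents shows that two consecutive pairs of a single coding are at least $q_{m+1}$ apart, so they are separated by gaps of size at least $q_{m+1}-1=\omega(\log q_m)=\omega(\log r_n)$; combined with the previous paragraph, which forces the minimal cross-coding offset to tend to infinity with $m$, every gap between consecutive disagreement pairs of $\boldsymbol u$ tends to infinity, giving the condition $i_{j+1}(n)-i_j(n)=\omega(1)$ of S3. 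Choosing the window $s_n\ge w r_n$ long enough to contain at least two of the widely spaced ``clusters'' of pairs guarantees a gap of size at least $q_{m+1}$ inside it, which yields $i_d(n)-i_1(n)=\omega(\log r_n)$, the remaining part of S3. Finally, to make the number $d$ of pairs a single constant valid for all $n$, I would pass to a subsequence of $m$ along which the partial quotient $a_{m+1}$ either stabilises or tends to infinity, and then adjust $s_n$ within one of the large gaps---without violating $s_n\ge wr_n$---so that exactly $d\ge2$ pairs are enclosed; this delivers S2 with the correct exact count and completes the verification.

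The main obstacle is the simultaneous bookkeeping required by S1--S3: one must hold the number of mismatch pairs at a single constant $d$, keep the window at least $w$ times the shift, and force every inter-pair gap (both within a single coding and across distinct codings) to grow without bound---all at once, and for arbitrarily large $w$. The transposition structure makes S4 essentially free, and the best-approximation estimates handle the within-coding separations cleanly; the delicate points are the quantitative use of $x_i-x_j\notin\mathbb Z\theta+\mathbb Z$ to separate distinct codings at every fixed scale, and the careful choice of a subsequence of convergents together with a window length that pins the count to an exact constant while respecting $s_n\ge wr_n$.
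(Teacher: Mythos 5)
Your analysis of the mismatch structure coincides with the paper's: taking the shift to be a continued-fraction denominator $q_m$, the positions where a coding disagrees with its shift are exactly the times the orbit enters two arcs of length $\|q_m\theta\|$ abutting $0$ and $\theta$; these arcs are exchanged by the rotation, so disagreements come in consecutive transposed pairs (whence S4), and the hypothesis $x_i-x_j\notin\mathbb{Z}\theta+\mathbb{Z}$ separates pairs belonging to distinct codings at any fixed scale. Where you diverge from the paper, and where there is a genuine gap, is the bookkeeping that must secure S1, S2 and S3 \emph{simultaneously}.

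The flaw is in your final step: fixing $s_n\geq wr_n$ first and then adjusting the window so that ``exactly $d\geq 2$ pairs are enclosed'' does not deliver S3's requirement $i_d(n)-i_1(n)=\omega(\log r_n)$. If $d\leq k$, all enclosed pairs may come from \emph{distinct} codings, and for cross-coding pairs the hypothesis yields only separation $\omega(1)$: the offset between a pair of coding $\ell$ and one of coding $\ell'$ is the least $|t|$ with $\|t\theta+(x_\ell-x_{\ell'})\|\lesssim\|q_m\theta\|$, and a nested-interval construction produces $\theta$ and $x_\ell-x_{\ell'}\notin\mathbb{Z}\theta+\mathbb{Z}$ for which this least $|t|$ is $O(\log q_m)$ along a subsequence of convergents. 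Your ``two clusters'' remark is the right idea, but spanning two clusters forces at least $k+1$ pairs in the window (only then does the pigeonhole give two pairs from the \emph{same} coding, which the law of best approximation places at distance $\geq q_{m+1}$ apart); ``exactly $d\geq 2$'' is not enough, and your extraction of a subsequence along which $a_{m+1}$ stabilises or diverges does not pin the pair count either --- what can be made constant by pigeonhole is the (bounded) number of pairs in the window, and that is the quantity that matters. Compare with the paper's device, which reverses your order of construction: it sets $d:=(k+1)w$ in advance and defines $s_n$ as the \emph{largest} window in which the Hamming distance between $u_0\ldots u_{s_n}$ and $u_{r_n}\ldots u_{r_n+s_n}$ is at most $2d$. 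Then S2 holds with exactly $d$ pairs automatically; S1 ($s_n\geq wr_n$) is \emph{derived} rather than imposed, because at least $w+1$ of the $d$ pairs come from a single coding and consecutive such pairs are more than $r_n$ apart; and $i_d(n)-i_1(n)\geq r_n=\omega(\log r_n)$ follows from the same pigeonhole since $d\geq k+1$. This eliminates all subsequence extraction and window adjustment, and it also yields the boundary conditions of S3, namely $i_1(n)-i_0(n)=\omega(1)$ and $i_{d+1}(n)-i_d(n)=s_n-i_d(n)=\omega(1)$ (the latter by maximality of $s_n$), which your sketch never addresses.
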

  \begin{proof}
    We start by recalling some basic facts about the
    continued-fractions.
   Write $[a_0,a_1,a_2,a_3,\ldots]$ for the simple continued-fraction
expansion of $\theta$.  Given $n\in \mathbb{N}$, we write
$\frac{p_n}{q_n}:=[a_0,a_1,\ldots,a_n]$ for the $n$-th convergent.
Then $\langle q_n \rangle_{n=0}^\infty$ is a strictly increasing
sequence of positive integers such that
$\| q_n\theta \| = |q_n\theta - p_n|$, where $\| \alpha \|$ denotes
the distance of a given number $\alpha \in \mathbb{R}$ to the nearest
integer.  We moreover have that $q_n\theta-p_n$ and
$q_{n+1}\theta-p_{n+1}$ have opposite signs for all $n$.  Finally we
have the \emph{law of best approximation}: $q\in\mathbb{N}$ occurs as
one of the $q_n$ just in case $\|q\theta\|<\|q'\theta\|$ for all $q'$
with $0<q'<q$.

To establish that $\boldsymbol u$ is stuttering, given $w>0$ we define
$\langle r_n\rangle_{n=0}^\infty$ to be the subsequence of
$\langle q_n \rangle_{n=0}^\infty$ comprising all terms $q_n$ such
that $\|q_n\theta\| = q_n\theta-p_n > 0$.  Note that we either have
$r_n=q_{2n}$ for all $n$ or $r_n=q_{2n+1}$ for all $n$, so
$\langle r_n\rangle_{n=0}^\infty$ is an infinite sequence that
diverges to infinity.  Next, write $d=(k+1)w$ and for all
$n\in\mathbb{N}$ define $s_n$ be the greatest number such that the
words $u_0 \ldots u_{s_n}$ and $u_{r_n} \cdots u_{r_n+s_n}$ have
Hamming distance at most $2d$.  Since $\boldsymbol{u}$ is not
ultimately periodic, $s_n$ is thereby well-defined.

\textbf{Condition S2.}
Denote  the set of positions at which $u_0\ldots u_{s_n}$ and 
$u_{r_n} \ldots u_{s_n+r_n}$ differ by
\begin{gather}
  \Delta_n:=\big\{ m \in \{0,\ldots,s_n\} : u_m \neq u_{m+r_n}
  \big\} \, .
  \label{eq:Delta}
\end{gather}
We claim that for $n$ sufficiently large, $m\in\Delta_n$ if and only if
there exists $\ell \in\{1,\ldots,k\}$ such that
one of the following two conditions holds:
      \begin{enumerate}
\item[(i)]$T^m(x_\ell) \in [1-\|r_n\theta\|,1)$, 
\item[(ii)] $T^m(x_\ell) \in [\theta-\|r_n\theta\|,\theta)$. 
\end{enumerate}
We claim furthermore that for all $m$ there is most $\ell$ such that one of above
conditions holds.

Assuming the claim, since $T^m(x_\ell) \in [1-\|r_n\theta\|,1)$ if and
only if $T^{m+1}(x_\ell) \in [\theta-\|r_n\theta\|,\theta)$, it
follows that the elements of $\Delta_n$ come in consecutive pairs,
i.e., we can write
\[ \Delta_n=\bigcup_{j=1}^d \{i_j(n),i_j(n)+1\} \, , \]
where  $i_1(n)<\ldots<i_d(n)$ are the elements $m\in\Delta_n$ that 
satisfy Condition~(i) above for some $\ell$.

It remains to prove the claim.  To this end note that for a
fixed $\ell \in\{1,\ldots,k\}$ we have
$u_m^{(\ell)} \neq u_{m+r_n}^{(\ell)}$ iff exactly one of
$T^m(x_\ell)$ and $T^{m+r_n}(x_\ell)$ lies in the interval
$[0,\theta)$ iff either Condition~(i) or Condition~(ii) holds.  
Moreover, since $x_{\ell}-x_{\ell'} \neq \theta \pmod 1$ for 
$\ell\neq\ell'$, we see that for  $n$ sufficiently large 
there is at most one $\ell\in\{1,\ldots,k\}$ such that one of these
two conditions holds.  Equivalently, for all $m$
there is at most one $\ell$ such 
that $u_m^{(\ell)} \neq u_{m+r_n}^{(\ell)}$.    We deduce that
$u_m \neq u_{m+r_n}$ if and only if 
$u_m^{(\ell)} \neq u_{m+r_n}^{(\ell)}$ for some 
$\ell\in \{1,\ldots,k\}$.  This concludes the proof of the claim.

\textbf{Condition S1.}  Our objective is to show that $s_n \geq wr_n$
for all $n\in \mathbb{N}$.  We have already established that there are
$d=(k+1)w$ distinct $m \in \Delta_n$ that satisfy Condition~(i), 
above,  for some $\ell \in \{1,\ldots,k\}$.  Thus there exists
$\ell_0 \in \{1,\ldots,k\}$ and $\Delta'_n \subseteq \Delta_n$ such
that $|\Delta'_n| \geq w$ and all $m \in \Delta_n'$ satisfy
Condition~(i) for $\ell=\ell_0$.  In this case we have
$\| (m_1-m_2)\theta \| < \| r_n\theta\|$ for all
$m_1,m_2 \in \Delta_n'$.  By the law of best approximation it follows
that every two distinct elements of $\Delta_n'$ have difference
strictly greater than $r_n$.  But this contradicts $|\Delta_n'| = w$
given that $\Delta_n' \subseteq \{0,1,\ldots,wr_n\}$.\qed

    \textbf{Condition S3.}
By definition of $i_1(n),\ldots,i_d(n)$, for all
  $j\in\{1,\ldots,d\}$ there exists $\ell_j(n) \in \{1,\ldots,k\}$ with 
  $T^{i_j(n)}(x_{\ell_j(n)}) \in [1-\|r_n\theta\|,1)$.
Now, for all $n\in\mathbb{N}$ and $1\leq j_1<j_2\leq d$ we have
    \begin{gather}
\|(i_{j_2}(n)-i_{j_1}(n))\theta+x_{\ell_{j_2}(n)}-x_{\ell_{j_1}(n)}\| \leq 
\|r_n\theta\| \, .
    \label{eq:KEY}
    \end{gather}
    We claim that the left-hand side of~\eqref{eq:KEY} is non-zero.
    Indeed, the claim holds if $\ell_{j_2}(n)=\ell_{j_1}(n)$ because
    $\theta$ is irrational, while the claim also holds in case
    $\ell_{j_2}(n)\neq \ell_{j_1}(n)$ since in this case we have
    $x_{\ell_{j_2}(n)}-x_{\ell_{j_1}(n)}
    \not\in\mathbb{Z}\theta+\mathbb{Z}$ by assumption.
    Since moreover the right-hand side of~\eqref{eq:KEY} tends to zero
    as $n$ tends to infinity, we have that
    $i_{j_2}(n) - i_{j_1}(n)=\omega(1)$.  On the other hand, if
    $\ell_{j_2}(n)=\ell_{j_1}(n)$ then we even have
    $i_{j_2}(n)-i_{j_1}(n) \geq r_n = \omega(\log r_n)$ by the law of best
    approximation.  Hence we certainly have $i_d(n)-i_1(n) =
    \omega(\log r_n)$.
    
    Finally, defining $i_0(n):=0$ we have $i_1(n)-i_0(n)=\omega(1)$ by
    the requirement that
    $T^{i_1(n)}(x_{\ell_1(n)}) \in [1-\|r_n\theta\|,1)$ and the fact
    that $\|r_n\theta\|$ converges to $0$.  Setting $i_{d+1}(n):=s_n$
    for all $n$, we also have $i_{d+1}(n)-i_d(n)=\omega(1)$ by the
    maximality condition in the definition of $s_n$.

\textbf{Condition S4.}
Consider $m \in \Delta_n$ satisfying Condition~(i)
above, i.e., such that $T^m(x_\ell) \in [1-\|r_n\theta\|,1)$ for some
$\ell \in \{1,\ldots,k\}$.  Then we have
\[ u^{(\ell)}_m=0,\, u^{(\ell)}_{m+1}=1 \quad\text{and}\quad 
    u^{(\ell)}_{m+r_n}=1,\, u^{(\ell)}_{m+r_n+1}=0 \, . \]
Moreover for all $\ell' \neq \ell$ and $n$ sufficiently large we have
  \[ u^{(\ell')}_m=u^{(\ell')}_{m+r_n} \quad\text{and}\quad 
    u^{(\ell')}_{m+1}= u^{(\ell')}_{m+r_n+1} \, . \]
We conclude that $u_m+u_{m+1}=u_{m+r_m}+u_{m+r_n+1}$, establishing
Condition S4.
\end{proof}
\section{A Transcendence Result}

\begin{theorem}
  Let $A$ be a finite set of algebraic numbers and suppose that
  $\boldsymbol{u} \in A^\omega$ is a stuttering sequence. Then for any
  algebraic number $\beta$ with $|\beta|>1$, the sum
  $\alpha:=\sum_{n=0}^\infty \frac{u_n}{\beta^n}$ is transcendental.
\label{thm:main}
\end{theorem}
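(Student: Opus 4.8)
The plan is to assume for contradiction that $\alpha$ is algebraic and then to manufacture, from the stuttering structure, a family of ``unexpectedly good'' solutions that violates the Subspace Theorem. Fix a number field $K$ containing $\alpha$, $\beta$ and the alphabet $A$, and a finite set of places $S\supseteq M_\infty(K)$ large enough that $\alpha\in\mathcal{O}_S$ and $\beta\in\mathcal{O}_S^\ast$; let $v_0$ be an archimedean place with $|\beta|_{v_0}>1$. Invoking the stuttering hypothesis with a parameter $w$ to be taken large, and passing to a subsequence on which the boundedly many mismatch jumps $\hat a_j:=u_{i_j(n)+r_n}-u_{i_j(n)}\neq 0$ are constant, the first step is a near-period identity. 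Telescoping $(\beta^{r_n}-1)\alpha$ and using that by S2 the only discrepancies in the prefix of length $s_n$ occur at the pairs $\{i_j(n),i_j(n)+1\}$, while by S4 each such pair is sum-preserving (so its two contributions collapse to the single term $\hat a_j(\beta-1)\beta^{-i_j(n)-1}$), I obtain
\[ (\beta^{r_n}-1)\alpha-Q_n(\beta)=(\beta-1)\sum_{j=1}^d\hat a_j\beta^{-i_j(n)-1}+T_n, \]
where $Q_n\in A[X]$ has degree $r_n$ and the tail satisfies $|T_n|_{v_0}\leq C|\beta|_{v_0}^{-s_n}\leq C|\beta|_{v_0}^{-wr_n}$.

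Next I multiply through by the $S$-unit $\beta^{i_d(n)+1}$, setting $R_n:=\beta^{i_d(n)+1}(\beta^{r_n}-1)$, $P_n:=\beta^{i_d(n)+1}Q_n(\beta)$ and $b_j:=\beta^{i_d(n)-i_j(n)}$, all in $\mathcal{O}_S$, and form $\boldsymbol x_n=(R_n,P_n,b_1,\ldots,b_d)\in\mathcal{O}_S^{d+2}$. I apply Theorem~\ref{thm:SUBSPACE} with coordinate forms at every place, except that at $v_0$ I replace the form reading off $R_n$ by $L(\boldsymbol x):=\alpha x_0-x_1-(\beta-1)\sum_j\hat a_jx_{j+2}$, whose value at $\boldsymbol x_n$ is exactly $\beta^{i_d(n)+1}T_n$. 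Because $\beta$ is an $S$-unit, the factors $\beta^{i_d(n)+1}$ and the $b_j$ contribute trivially to each $\prod_{v\in S}|\cdot|_v$, so the whole product telescopes to $|T_n|_{v_0}/|\beta^{r_n}-1|_{v_0}$ times quantities of size $e^{O(r_n)}$, yielding a bound $e^{-(w+1)r_n\log|\beta|_{v_0}+O(r_n)}$; since $\log H(\boldsymbol x_n)=O(wr_n)$, taking $w$ large and $\varepsilon$ correspondingly small makes this at most $H(\boldsymbol x_n)^{-\varepsilon}$. This is precisely where the arbitrariness of $w$ in the definition of stuttering is indispensable.

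The Subspace Theorem then confines the (distinct, since $r_n\to\infty$) vectors $\boldsymbol x_n$ to finitely many proper subspaces, so some fixed nonzero $(c_0,\ldots,c_{d+1})$ gives a relation $c_0R_n+c_1P_n+\sum_jc_{j+1}b_j=0$ along a subsequence. Eliminating $P_n$ via the near-period identity turns this into $(c_0+c_1\alpha)R_n+\sum_je_jb_j=c_1\beta^{i_d(n)+1}T_n$ with $e_j:=c_{j+1}-c_1(\beta-1)\hat a_j$. Since at $v_0$ the term $|R_n|_{v_0}\approx|\beta|_{v_0}^{i_d(n)+1+r_n}$ dwarfs everything else, the coefficient $c_0+c_1\alpha$ must vanish. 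If $c_1=0$ this leaves $c_0=0$ and the exact sparse relation $\sum_jc_{j+1}\beta^{i_d(n)-i_j(n)}=0$ with $(c_2,\ldots,c_{d+1})\neq 0$; if $c_1\neq 0$, a second size comparison (the surviving $b_j$-terms have size $\geq 1$ at $v_0$ while the right-hand side tends to $0$) forces every $e_j=0$ and hence $T_n=0$, so that $\beta$ is a root of $G_n(X):=\alpha X^{i_d+1}(X^{r_n}-1)-X^{i_d+1}Q_n(X)-(X-1)\sum_j\hat a_jX^{i_d-i_j}$.

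In either case the contradiction is delivered by Proposition~\ref{prop:gap} together with S3. When $c_1\neq 0$, $G_n$ consists of a dense bulk of degree $\geq i_d+1$ and a low-degree mismatch block; because $d$ is fixed while $i_d(n)-i_1(n)=\omega(\log r_n)$, pigeonholing yields a consecutive gap $i_{j^\ast+1}(n)-i_{j^\ast}(n)=\omega(\log r_n)$, which exceeds $\log(k\,H)/\log H(\beta)=O(\log r_n)$ (as $H(G_n)=O(1)$ and $k=O(r_n)$), so Proposition~\ref{prop:gap} shows $\beta$ is already a root of the lower part of that block. When $c_1=0$ the relation is sparse from the outset. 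In both cases $\beta$ satisfies a relation with boundedly many terms whose exponents are spaced by $\omega(1)$, and iterating Proposition~\ref{prop:gap} peels this down to a single monomial, forcing a coefficient to vanish: either some $c_{j+1}$, contradicting that $(c_0,\ldots,c_{d+1})$ is nonzero, or some $\hat a_j$, contradicting that the $i_j(n)$ are genuine mismatch positions. Hence $\alpha$ is transcendental. I expect the main obstacle to be the second step: arranging the Subspace estimate so that the single-place saving from the tail overwhelms the height cost accumulated across all of $S$—which works only because $\beta$ is an $S$-unit (removing all dependence on the large shift $i_d(n)$ from the $S$-products) and because the saving exponent $s_n\geq wr_n$ may be made arbitrarily large.
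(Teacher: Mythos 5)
Your proposal follows essentially the same route as the paper's own proof: the same near-period identity extracted from S2 and S4 (your collapse of each mismatch pair into the single term $\hat a_j(\beta-1)\beta^{-i_j(n)-1}$ is exactly the paper's observation that its coefficients $c_j(n)$ are non-zero), the same application of Theorem~\ref{thm:SUBSPACE} with coordinate forms everywhere except one special form at $v_0$ encoding that identity, the same essential use of the arbitrariness of $w$, and the same endgame combining dominant-term estimates at $v_0$ with Proposition~\ref{prop:gap} and condition S3. The differences are organisational rather than substantive: you clear denominators by $\beta^{i_d(n)+1}$ to get a $(d+2)$-dimensional vector of $S$-integers, where the paper uses the $(d+3)$-dimensional vector $(\beta^{r_n},1,\alpha_n,\beta^{-i_1(n)},\ldots,\beta^{-i_d(n)})$; and after the Subspace step you eliminate the dense coordinate first and finish with Proposition~\ref{prop:gap}, where the paper invokes Proposition~\ref{prop:gap} first (to localise $\mathrm{vars}(L)$) and finishes with the dominant-term estimate.

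There is, however, one genuine error, located precisely at the step you flag as the main obstacle. The claim $\log H(\boldsymbol{x}_n)=O(wr_n)$ is false: your coordinates carry the factor $\beta^{i_d(n)+1}$, and $i_d(n)$ can be of order $s_n$, which the stuttering definition bounds only from \emph{below} by $wr_n$ --- nothing in S1--S4 bounds $s_n$ from above in terms of $r_n$, and in the Sturmian construction of Theorem~\ref{thm:main1} the ratio $s_n/r_n$ is genuinely unbounded when $\theta$ has unbounded partial quotients. Consequently the chain ``product $\le e^{-(w+1)r_n\log|\beta|_{v_0}+O(r_n)} \le H(\boldsymbol{x}_n)^{-\varepsilon}$'' does not close, because you weakened the tail saving from $|\beta|_{v_0}^{-s_n}$ to $|\beta|_{v_0}^{-wr_n}$ before comparing it with a height that may be of size $e^{\Theta(s_n)}$. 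The repair is immediate and is exactly what the paper does: keep the saving at full strength $|T_n|_{v_0}\le C|\beta|_{v_0}^{-s_n}$, and bound the height by $\log H(\boldsymbol{x}_n)=O(s_n)$ with a constant independent of $w$ (using $i_d(n)\le s_n$ and $r_n\le s_n/w$); then the total product is at most $e^{-s_n(\,c\log|\beta|_{v_0}-O(1/w)\,)}$, which for $w$ large is below $H(\boldsymbol{x}_n)^{-\varepsilon}$ for a fixed $\varepsilon>0$. A second, much smaller point: since at $v_0$ you replace the coordinate form $x_0$ by $L$, whose $x_0$-coefficient is $\alpha$, your form system at $v_0$ degenerates if $\alpha=0$ (a case the contradiction hypothesis permits); the paper's arrangement, which replaces the form reading the dense coordinate $\alpha_n$ (coefficient $-1$), is robust to this, and you should either do likewise or dispose of $\alpha=0$ separately. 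With these corrections your argument goes through.
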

  \begin{proof}
    Suppose for a contradiction that $\alpha$ is algebraic.  By
    scaling we can assume without loss of generality that $A$ consists
    solely of algebraic integers.  Let $K=\mathbb{Q}(\beta)$ be the
    field generated over $\mathbb{Q}$ by $\beta$ and write
    $S\subseteq M(K)$ for the set comprising all infinite places of
    $K$ and all finite places of $K$ corresponding to prime-ideal
    divisors of the ideal $\beta\mathcal{O}_K$.
    
    Applying the stuttering condition (for a value of $w$ to be
    determined later), we obtain $d\geq 2$ such that for all
    $n\in\mathbb N$ there are positive integers
    $r_n,s_n,i_1(n),\ldots,i_{d}(n)$ satisfying conditions S1--S4. By condition S2, for all $n$ if we define
    \[ c_{j}(n):= (u_{i_j(n)+r_n}-u_{i_j(n)})+ (u_{i_{j}(n)+r_n+1} -
      u_{i_{j}(n)+1})\beta^{-1}, \quad j\in\{1,2,\ldots,d\}
    \]
    and $\alpha_n := \sum_{j=0}^{r_n} {u_j}{\beta^{r_n-j}}$
    then we have
    \begin{gather}
\left|\beta^{r_n}\alpha - \alpha - \alpha_n - c_1(n) \beta^{-i_1(n)} -  \cdots - c_d(n)
  \beta^{-i_d(n)} \right| < |\beta|^{-s_n} \, , 
\label{eq:ONE}
\end{gather}
Note that $c_1(n),\ldots,c_d(n)$ are non-zero by Condition S4.  By
passing to a subsequence we can furthermore assume without loss of
generality that $c_1=c_1(n),\ldots,c_d=c_d(n)$ are constant,
independent of $n$.

To set up the application of the Subspace Theorem, 
define a family of linear forms $L_{i,v}$, for $1\leq i \leq 3+d$ and
$v\in{S}$, by
\[\begin{array}{rcl}
L_{i,v}(x_1,\ldots,x_{3+d}) &:= & x_i \;\mbox{for all $ (i,v)\neq
                                  (3,v_0)$, and }\\
  L_{3,v_0}(x_1,\ldots,x_{3+d}) &:= & \alpha x_1-\alpha x_2- x_3 -
                                      \sum_{j=1}^{d} c_j x_{3+j} \,  .
\end{array}\]
Write
$\boldsymbol{b}_n:=\left(\beta^{r_n},1,\alpha_n,\beta^{-i_1(n)},\ldots,\beta^{-i_{d}(n)}\right)$ and
let $M\geq 2$ be an upper bound of the set of real numbers
\[ \left\{ |\gamma |_v : \gamma \in \{\beta\} \cup A ,\, v \in S\right\} \, . \]
Then for all $v\neq v_0$ we have
\[ |L_{3,v}(\boldsymbol{b}_n)|_v = |\alpha_n|_v \leq
    \sum_{j=0}^{r_n} M^{j+1} \leq M^{r_n+2}  \, , \] 
while
$|L_{3,v_0}(\boldsymbol{b}_n)|_{v_0} \leq
|\beta|^{-s_n/{\deg(\beta)}}$  by~\eqref{eq:ONE}.  Furthermore, for
$i\neq 3$, by the product formula we have $\prod_{v\in {S}}
|L_{i,v}(\boldsymbol{b}_n)|_v = 1$.  Altogether we have
\begin{gather}
\prod_{v \in {S}} \prod_{i=1}^{d+3} |
L_{i,v}(\boldsymbol{b}_n) |_v \leq M^{(r_n+2)|S|}
\cdot |\beta|^{-s_n/{\deg(\beta)}} \,  .
\label{eq:INEQ3}
\end{gather}

Since $s_n\geq wr_n$ we have that for $w$ sufficiently large the right-hand side
of~\eqref{eq:INEQ3} is less than $|\beta|^{-s_n/{2\deg(\beta)}}$.  On the other hand
there exists a constant $c$ such that the height of $\boldsymbol b_n$
satisfies the bound $H(\boldsymbol{b}_n) \leq |\beta|^{c s_n}$ for all
$n$.  Thus there exists $\varepsilon>0$ such that the right-hand side
of~\eqref{eq:INEQ3} is at most $H(\boldsymbol{b}_n)^{-\varepsilon}$
for all $n$.  Since $\boldsymbol{b}_n$ is a vector of $S$-units we can
apply the Subspace Theorem to obtain a non-zero linear form
$L(x_1,\ldots,x_{3+d})$ with coefficients in $K$ such that
$L(\boldsymbol{b}_n)=0$ for infinitely many $n\in\mathbb{N}$.

Denote by $\mathrm{vars}(L) \subseteq \{x_1,\ldots,x_{3+d}\}$ the set
of variables that appear in $L$ with non-zero coefficient.  We claim
that $x_3 \in \mathrm{vars}(L)$.  Indeed, suppose for a contradiction
that $x_3 \not\in \mathrm{vars}(L)$. Then for all $n$,
$L(\boldsymbol{b}_n)$ is a fixed linear combination of the numbers
$\beta^{r_n},1,\beta^{-i_1(n)},\ldots,\beta^{-i_d(n)}$.  By Item~S3
the gaps beween successive exponents in these powers of $\beta$ tend
to infinity with $n$ and hence a fixed linear combination of such
powers cannot vanish for arbitrarily large $n$.

We have that $L(\boldsymbol{b}_n)$ is a linear combination of a most 
$r_n+d+1$ powers of $\beta$, whose respective exponents lie in the set 
$\{0,1,\ldots,r_n\} \cup \{-i_1(n),\ldots,-i_d(n)\}$. 
From Item S3
there exists $j_0 \in \{1,\ldots,d-1\}$ such that
$i_{j_0+1}(n)-i_{j_0}(n) = \omega(\log r_n)$.  By
Proposition~\ref{prop:gap} the condition $L(\boldsymbol{b}_n)=0$
entails, for $n$ sufficiently large, that $\mathrm{vars}(L)$ is
contained either in $\{x_1,\ldots,x_{j_0+3}\}$ or in
$\{x_{j_0+4},\ldots,x_d\}$.  Since we know that $x_3 \in
\mathrm{vars}(L)$ the former inclusion applies.

We have established that
$x_3 \in \mathrm{vars}(L)\subseteq \{x_1,\ldots,x_{j_0+3}\}$.  Thus by
a suitable linear combination of the forms $L_{3,v_0}$ and $L$, so as
to eliminate the variable $x_3$, we obtain a non-zero linear form
$L'(x_1,\ldots,x_{3+d})$ with algebraic coefficients that does not
mention $x_3$ and such that $|L'(\boldsymbol{b}_n)|<|\beta|^{-s_n}$
for infinitely many $n$.  Note that $L'(\boldsymbol{b}_n)$ is a fixed
linear combination of at most $d+2$ powers of $\beta$, with respective
exponents in the set $\{r_n,0,-i_1(n),\ldots,-i_d(n)\}$.  Moreover by
Item S3 the gaps between consecutive elements of this set tend to
infinity with $n$.  It follows that
$|L'(\boldsymbol{b}_n)|\gg |\beta|^{-i_d(n)}$.  But since
$s_n-i_{d}(n) = \omega(1)$, this contradicts
$|L'(\boldsymbol{b}_n)|<|\beta|^{-s_n}$.
\end{proof}

We have the following immediate corollary of Theorem~\ref{thm:main1}
and Theorem~\ref{thm:main}.   
\begin{theorem}
  Let $\beta$ be an algebraic number with $|\beta|>1$.  Let 
  $0<\theta<1$ be irrational and let $x_1,\ldots,x_k \in I$ be 
  such that $x_i-x_j \not\in\mathbb{Z}\theta+\mathbb{Z}$ for 
  $i\neq j$.  For $i=1,\ldots,k$, define 
  $\alpha_i : = \sum_{n=0}^\infty \frac{u_n^{(i)}}{\beta^n}$, where 
  $\langle u_n^{(i)} \rangle_{n=0}^\infty$ is the $\theta$-coding of 
  $x_i$.  Then the set $\{1,\alpha_1,\ldots,\alpha_k\}$ is linearly 
  independent over the field $\overline{\mathbb{Q}}$ of algebraic numbers. 
  \label{thm:main2}
  \end{theorem}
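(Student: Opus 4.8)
The plan is to argue by contradiction, reducing a putative $\overline{\mathbb{Q}}$-linear dependence to a single Sturmian number that Theorem~\ref{thm:main} forces to be transcendental yet which the dependence forces to vanish. So suppose $\{1,\alpha_1,\ldots,\alpha_k\}$ is linearly dependent over $\overline{\mathbb{Q}}$, witnessed by algebraic numbers $c_0,c_1,\ldots,c_k$, not all zero, with $c_0+\sum_{i=1}^k c_i\alpha_i=0$. I would first note that $c_1,\ldots,c_k$ cannot all vanish, for then the relation reads $c_0=0$ and all coefficients would be zero. Discarding those $\alpha_i$ whose coefficient is zero and relabelling, I may therefore assume $c_i\neq 0$ for every $i\in\{1,\ldots,k\}$ with $k\geq 1$; the hypothesis $x_i-x_j\notin\mathbb{Z}\theta+\mathbb{Z}$ for $i\neq j$ is inherited by the surviving indices. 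This reduction is what guarantees that a mismatch in any single coding $\boldsymbol u^{(i)}$ is visible in the combined sequence, which is precisely the nondegeneracy that Theorem~\ref{thm:main1} relies upon.

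The key manipulation is to absorb the constant $c_0$ into the digit sequence by means of the geometric series $\sum_{n=0}^\infty\beta^{-n}=\frac{\beta}{\beta-1}$, which converges since $|\beta|>1$ and is nonzero since $\beta\neq 1$. Setting $c_0':=c_0\frac{\beta-1}{\beta}\in\overline{\mathbb{Q}}$ and $u_n:=c_0'+\sum_{i=1}^k c_i u_n^{(i)}$ for each $n$, I would record two facts. First, since each $u_n^{(i)}\in\{0,1\}$, the sequence $\boldsymbol u=\langle u_n\rangle_{n=0}^\infty$ takes its values in the finite set $A:=\{c_0'+\sum_{i\in T}c_i : T\subseteq\{1,\ldots,k\}\}\subseteq\overline{\mathbb{Q}}$ of algebraic numbers, as required by Theorem~\ref{thm:main}. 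Second, summing termwise,
\[
  \sum_{n=0}^\infty\frac{u_n}{\beta^n}
  = c_0'\cdot\frac{\beta}{\beta-1}+\sum_{i=1}^k c_i\alpha_i
  = c_0+\sum_{i=1}^k c_i\alpha_i = 0 \, .
\]

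With this in place I would invoke the two main results in turn. Since $c_0',c_1,\ldots,c_k\in\mathbb{C}$ and the points $x_1,\ldots,x_k\in I$ satisfy $x_i-x_j\notin\mathbb{Z}\theta+\mathbb{Z}$ for $i\neq j$, Theorem~\ref{thm:main1} applies directly and shows that $\boldsymbol u$ is stuttering. As $\boldsymbol u$ is stuttering over the finite algebraic alphabet $A$ and $\beta$ is algebraic with $|\beta|>1$, Theorem~\ref{thm:main} then yields that $\sum_{n=0}^\infty u_n\beta^{-n}$ is transcendental, contradicting the displayed identity $\sum_{n=0}^\infty u_n\beta^{-n}=0$. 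I do not anticipate any genuine difficulty here, since the statement is labelled an immediate corollary; the nearest thing to an obstacle is the bookkeeping of the reduction, namely verifying that the constant $c_0$ can be re-encoded as the uniform digit $c_0'$ without disturbing the hypotheses of Theorem~\ref{thm:main1} and that, after discarding vanishing coefficients, at least one genuine Sturmian summand survives so that both theorems are applied to a nondegenerate sequence. Granting this, the corollary follows at once.
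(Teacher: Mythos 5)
Your proposal is correct and matches the paper's intended argument: the paper gives no separate proof, stating the result as an immediate corollary of Theorem~\ref{thm:main1} and Theorem~\ref{thm:main}, and your reduction---absorbing $c_0$ into the digit sequence via the geometric series, discarding vanishing coefficients, and then deriving a contradiction between the vanishing of $\sum_{n=0}^\infty u_n\beta^{-n}$ and its transcendence---is exactly that corollary unpacked. Your care in ensuring all surviving coefficients are nonzero is warranted, since that nondegeneracy is tacitly relied upon in the proof of Theorem~\ref{thm:main1}.
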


\section{Application to Limit Sets of Contracted Rotations}
\label{sec:contracted}
Let $0<\lambda,\delta<1$ be real numbers such that $\lambda+\delta>1$.
We call the map $f= f_{\lambda,\delta} : I \rightarrow I$ given by
$f(x) := \{ \lambda x + \delta \}$ a \emph{contracted rotation} with
slope $\lambda$ and \emph{offset} $\delta$.  Associated with $f$ we
have the map
$F=F_{\lambda,\delta} : \mathbb{R}\rightarrow \mathbb{R}$, given by
$F(x)=\lambda \{x \} + \delta + \lfloor x \rfloor $.  We call $F$ a
\emph{lifting} of $f$: it is characterised by the properties that
$F(x+1)=F(x)+1$ and $\{ F(x) \} = f( \{x\} )$ for all
$x\in \mathbb{R}$.  The \emph{rotation number}
$\theta=\theta_{\lambda,\delta}$ of $f$ is defined by
\[ \theta:= \lim_{n\rightarrow \infty} \frac{F^n(x_0)}{n} \, , \]
where the limit exists and is independent of the initial point
$x_0\in\mathbb{R}$.

If the rotation number $\theta$ is irrational then the restriction of  
$f$ to the \emph{limit set} $\bigcap_{n\geq 0} f^n(I)$ is  
topologically conjugated to the rotation map  
$T = T_\theta:I\rightarrow I$ with $T(y) = \{y+\theta\}$.  The closure  
of the limit set is a Cantor set $C=C_{\lambda,\delta}$, that is, $C$ 
is compact, nowhere dense, and has no isolated points.  On the other  
hand, if $\theta$ is rational then the limit set $C$ is the unique  
periodic orbit of $f$.  For each choice of slope $0<\lambda<1$ and
irrational rotation number $0<\theta<1$, there exists a unique offset
$\delta$ such that $\delta+\lambda>1$ and the map $f$ has rotation
number $\theta$.  It is known that such $\delta$ must be
transcendental if $\lambda$ is algebraic~\cite{LN}.

\begin{figure}
\begin{center}
  \begin{tikzpicture}
\draw [thick] (0,0)--(5,0) ; 
\draw [thick] (0,0)--(0,5) ; 
\coordinate (Y0) at (0,5) ; 
\coordinate (Y1) at (0,4) ; 
\coordinate (Y2) at (0,1) ; 
\coordinate (M0) at (5,1) ; 
\coordinate  (X0) at (0,0) ; 
\coordinate (X1) at (2.5,0) ; 
\coordinate (X2) at (5,0) ; 
\coordinate (M1) at (2.5,5) ; 

\node at (Y0) [left=0.15] {$1$}; 
\node at (Y1) [left=0.15] {$\delta$}; 
\node at (Y2) [left=0.15] {$\delta+\lambda - 1$}; 
\node at (X0) [below=0.25, left=0.15] {$0$} ; 
\node at (X1) [below=0.15] {$\frac{1-\delta}{\lambda}$}; 
\node at (X2) [below=0.15] {$1$} ; 
\draw [dashed] (Y2)--(M0) ; 
\draw [dashed] (X1)--(M1) ; 
\draw [thick] (Y1)--(M1) ; 
\draw [thick] (X1)--(M0) ; 
\end{tikzpicture}
\caption{A plot of $f_{\lambda,\delta} : I \rightarrow I$}
\end{center}
\end{figure}
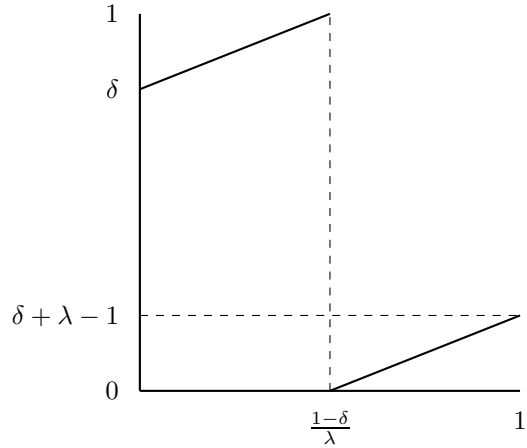

The main result of this section is as follows:

\begin{theorem}
\label{thm:cantor}
Let $0<\lambda,\theta<1$ be such
that $\lambda$ is algebraic and $\theta$ is irrational.  Let $\delta$
be the unique offset such that the contracted rotation
$f_{\lambda,\delta}$ has rotation number $\theta$.  Then every element
of the limit set $C_{\lambda,\delta}$ other than $0$ and $1$
is transcendental.
\end{theorem}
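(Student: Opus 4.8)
The plan is to realise every point of the limit set as a Sturmian number in the base $\beta := 1/\lambda$ (which is algebraic with $|\beta|>1$ since $0<\lambda<1$) and then to invoke Theorem~\ref{thm:main2}. First I would record the two-branch form of the contracted rotation: for $x\in I$ one has $f(x)=\lambda x+\delta-b$, where $b=b(x)\in\{0,1\}$ equals $1$ exactly when $x\ge (1-\delta)/\lambda$. Using the lifting and the fact that $f$ contracts by $\lambda$ on each branch, I would show that a point $y$ of the limit set is reconstructed from its backward orbit $y=y_0,y_{-1},y_{-2},\dots$ (with $f(y_{-n})=y_{-n+1}$) by the telescoping identity
\[ y \;=\; \lambda^{n}y_{-n}+\sum_{j=1}^{n}\lambda^{\,j-1}\bigl(\delta-b_{-j}\bigr), \]
where $b_{-j}\in\{0,1\}$ is the branch taken at $y_{-j}$. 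Since $y_{-n}\in I$ is bounded and $\lambda^{n}\to 0$, letting $n\to\infty$ yields the closed form
\[ y \;=\; \frac{\delta}{1-\lambda}-\sum_{n=0}^{\infty}c_n\lambda^{n}\;=\;\frac{\delta}{1-\lambda}-S_\beta(\boldsymbol c_y), \]
where $\boldsymbol c_y:=\langle b_{-1-n}\rangle_{n\ge 0}$ is the backward itinerary of $y$. I would sanity-check this against the fixed points $(\delta-b)/(1-\lambda)$, whose constant itineraries reproduce the formula.

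Next I would use the topological conjugacy between $f$ restricted to the limit set and the rotation $T_\theta$. Under this conjugacy the branch partition of $C$ at $(1-\delta)/\lambda$ pulls back to a partition of $I$ into two arcs, so the backward itinerary $\boldsymbol c_y$ is exactly the coding of the corresponding point $t_y\in I$ by an arc of length $\theta$; that is, $\boldsymbol c_y$ is Sturmian of slope $\theta$ (the orientation is pinned down by the conjugacy being monotone, which may force recording slope $1-\theta$, equally irrational). Crucially the endpoints $0$ and $1$ lie in $\overline{C}$ and are \emph{algebraic}; applying the series representation to $y=0$, via the characteristic coding $\boldsymbol c_0$ of the relevant boundary point, yields $\tfrac{\delta}{1-\lambda}=S_\beta(\boldsymbol c_0)$, so that for an arbitrary $y\in C$ the offset cancels and
\[ y \;=\; S_\beta(\boldsymbol c_0)-S_\beta(\boldsymbol c_y). \]

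With this identity the transcendence is immediate from Theorem~\ref{thm:main2}. If $t_y$ and $t_0$ lie in distinct $T_\theta$-orbits, i.e.\ $t_0-t_y\notin\mathbb{Z}\theta+\mathbb{Z}$, then $\{1,S_\beta(\boldsymbol c_0),S_\beta(\boldsymbol c_y)\}$ is linearly independent over $\overline{\mathbb{Q}}$; were $y$ algebraic, the displayed relation would be a nontrivial $\overline{\mathbb{Q}}$-dependence, since the coefficient of $S_\beta(\boldsymbol c_0)$ is $1$, a contradiction. If instead $t_y$ lies on the $T_\theta$-orbit of $t_0$, then $\boldsymbol c_y$ and $\boldsymbol c_0$ are shifts of one another, and using $S_\beta(\sigma^m\boldsymbol c)=\lambda^{-m}\bigl(S_\beta(\boldsymbol c)-\sum_{j<m}c_j\lambda^{j}\bigr)$ (applied in whichever direction relates the two codings) the identity collapses to $y=(\lambda^{-m}-1)\,S_\beta(\boldsymbol c')+a$ with $a\in\overline{\mathbb{Q}}$ and $m\ge 1$ as $y\neq 0$; hence $y$ is transcendental because $S_\beta(\boldsymbol c')$ is so by Theorem~\ref{thm:main2} with $k=1$ and $\lambda^{-m}-1\neq 0$.

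The main obstacle I anticipate is the boundary bookkeeping of the third paragraph: rigorously setting up the conjugacy so that the backward itinerary is genuinely a $\theta$-coding with the correct orientation, and, most delicately, justifying the series representation at the algebraic anchor point $y=0$ (and symmetrically $y=1$), which is precisely what allows the transcendental offset $\delta$ to be eliminated. Everything else is routine; in particular the transcendence of $\delta$ from~\cite{LN} is not even needed once $0$ is taken as the reference point.
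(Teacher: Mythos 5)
Your proposal follows, in essence, the same route as the paper: realise each element of the limit set as an algebraic-affine combination of Sturmian numbers in base $\beta=1/\lambda$ in which the transcendental offset $\delta$ has been cancelled, then invoke Theorem~\ref{thm:main2}, splitting into exactly your two cases (basepoints in distinct $T_\theta$-orbits, handled with $k=2$; basepoints in the same orbit, where the relation collapses to $y=\gamma+(\beta^{m}-1)\,S_\beta(\boldsymbol{c}')$ with $\gamma$ algebraic and $k=1$ suffices). The genuine difference is how the representation is obtained. The paper does not prove it at all: it cites \cite[Lemma 4.2]{BKLN21} (with a footnote that the proof given there for integer base carries over to algebraic $\beta$), which states that every $y\in C_{\lambda,\delta}\setminus\{0,1\}$ equals either $z+\xi_0-\xi_{-x}$ with $z\in\mathbb{Z}$ and $x\notin\mathbb{Z}\theta+\mathbb{Z}$, or $\gamma+(1-\beta^{-m})\,\xi'_0$ with $\gamma\in\mathbb{Q}(\beta)$ and $m\geq 1$ --- precisely your anchored identity, in precisely your two cases. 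Your first three paragraphs are therefore a plan for re-proving that lemma via backward itineraries; what this buys is self-containedness (no reliance on the cited lemma or on checking its extension to algebraic bases), at the cost of having to do the dynamical bookkeeping yourself.

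On that bookkeeping: the obstacle you flag at the end is real and is the entire content of the cited lemma, not a routine verification. Note in particular that $0$ need not belong to the limit set $\bigcap_n f^n(I)$ itself: a backward iterate of $0$ could in principle land in the interval $[\lambda+\delta-1,\delta)$ of points with no $f$-preimage, so the itinerary $\boldsymbol{c}_0$ does not exist naively (one must use irrationality of $\theta$ to exclude this, or define $\boldsymbol{c}_0$ as a limit of itineraries). The same applies to the countably many gap endpoints of $C$, which lie in $C\setminus\bigcap_n f^n(I)$; and limits of $\theta$-codings need not be $\theta$-codings in the strict sense required by Theorem~\ref{thm:main1} (visits to the half-open interval $[0,\theta)$), so finitely many digits may need correcting --- harmless for linear independence, since the correction is algebraic, but it must be said. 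In short: your deduction from the representation to transcendence is complete and correct and matches the paper; the representation itself is where all the remaining work lies, and if you replace your sketch of it by the citation to \cite[Lemma 4.2]{BKLN21}, you recover the paper's proof exactly.
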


A special case of Theorem~\ref{thm:cantor}, in which $\lambda$ is
assumed to be the reciprocal of an integer, was proven
in~\cite[Theorem 1.2]{BKLN21}.  In their discussion of the latter
result the authors conjecture the truth of Theorem~\ref{thm:cantor},
i.e., the more general case in which $\lambda$ may be algebraic.  As
noted in~\cite{BKLN21}, while $C_{\lambda,\delta}$ is homeomorphic to
the Cantor ternary set, it is a longstanding open problem, formulated
by Mahler~\cite{Mahler84}, whether the Cantor ternary set
contains irrational algebraic elements.

\begin{proof}[Proof of Theorem~\ref{thm:cantor}]
  For a real number $0<x<1$ define
  \begin{eqnarray*}
    \xi_{x}&:= &\sum_{n \geq 1} \left( \lceil x + (n+1)\theta 
    \rceil - \lceil x + n\theta \rceil \right) \lambda^{n} \\
    \xi'_{x} &:= & \sum_{n\geq 1} \left( \lfloor x + (n+1)\theta 
    \rfloor - \lfloor x + n\theta \rfloor \right) \lambda^{n} \, .
\end{eqnarray*}
Note that for all $x$ the binary sequence 
$\langle \, \lceil x + (n+1)\theta \rceil - \lceil x + n\theta \rceil  
: n \in \mathbb{N} \, \rangle$ is  
the coding of $-x-\theta$ by $1-\theta$ (as defined in
Section~\ref{sec:stuttering}) and hence is Sturmian of
slope $1-\theta$.  Similarly, the binary sequence
$\langle \, \lfloor x + (n+1)\theta \rfloor - \lfloor x + n\theta \rfloor 
: n \in \mathbb{N} \, \rangle$ is the coding of $x+\theta$ by $\theta$
and hence is Sturmian of slope $\theta$.  Thus for all $x$, both
$\xi_x$ and $\xi'_x$ are Sturmian numbers.

  It is shown in~\cite[Lemma 4.2]{BKLN21}\footnote{The proof
    of the lemma is stated for $\beta$ an integer but carries over
    without change for $\beta$ algebraic.} that for every element of
  $y \in C_{\lambda,\delta} \setminus \{0,1\}$, either there exists
  $z \in \mathbb{Z}$ and $0<x<1$
with $x \not\in \mathbb{Z}\theta+\mathbb{Z}$
  such that
  \[ y = z + \xi_{0} - \xi_{-x} \] or else there exists
  a strictly positive integer $m$ and 
  $\gamma \in \mathbb{Q}(\beta)$ such
  that
  \[ y = \gamma + (1-\beta^{-m})  \, \xi'_{0} \, .\] 
In either case, transcendence of $y$
  follows from Theorem~\ref{thm:main2}.
\end{proof}

\textbf{Acknowledgements.}  The authors would like to thank Pavol
Kebis and Andrew Scoones for helpful feedback and corrections.

\end{document}